\documentclass[conference]{IEEEtran}
\usepackage{graphicx,amssymb,amsmath}
\usepackage[noadjust]{cite}%cite
\usepackage{setspace}               % set line space
\usepackage{stfloats}
\usepackage{bbding}
\usepackage{amsthm}
\usepackage{amsmath}
\usepackage{flushend}
\usepackage{cases,subeqnarray}
\usepackage{bm,multirow,bigstrut}
\usepackage{textcomp}
\usepackage{latexsym,bm}
\usepackage{booktabs,changebar}
\usepackage{xcolor}
\usepackage{mathtools}
\usepackage{dsfont}
\usepackage{extarrows}
\usepackage{mathrsfs}
\usepackage{subfigure}
\usepackage{pifont}

\usepackage{cite}
\usepackage{bm}
\usepackage{cleveref}
\usepackage{multicol}       % table
\usepackage{multirow}       % table
\usepackage{array}          % table
\usepackage{colortbl}
\usepackage{makecell}
\definecolor{crimson}{RGB}{192,0,0}         % color crimson
\definecolor{navy}{RGB}{47,85,151}         % color crimson

\makeatletter                               % algorithm
\newif\if@restonecol
\makeatother

\makeatletter
\newif\if@restonecol
\makeatother

\usepackage[linesnumbered,ruled,vlined]{algorithm2e}%[ruled,vlined]{
\usepackage{algpseudocode}
\usepackage{amsmath}
\usepackage{listings}                       % Matlab code
\usepackage[framed,numbered,autolinebreaks,useliterate]{mcode}
\theoremstyle{plain}

\newtheorem{prop}{Proposition}
\newtheorem{coro}{Corollary}

\theoremstyle{plain}
\newtheorem{rem}{Remark}

\IEEEoverridecommandlockouts

\begin{document}

%----------------------------title&author&thanks----------------------------
\title{Statistics Approximation-Enabled Distributed Beamforming for Cell-Free Massive MIMO}
\author{Zhe Wang$^\star$, Emil Bj{\"o}rnson$^\star$, Jiayi Zhang$^\dag$, Peng Zhang$^\star$, Vitaly Petrov$^\star$, Bo Ai$^\dag$\\
{\small $^\star$KTH Royal Institute of Technology and Digital Futures, Stockholm, Sweden.} {\small $^\dag$Beijing Jiaotong University, Beijing, China.}
\thanks{This work was supported in part by the research center Digital Futures in Sweden and in part by the SSF Future Research Leaders Program (Project FFL24-0100). \emph{(Corresponding author: Zhe Wang (e-mail: zhewang2@kth.se))}}}

\maketitle

%----------------------------abstract----------------------------

\begin{abstract}
We study a distributed beamforming approach for cell-free massive multiple-input multiple-output networks, referred to as Global Statistics \& Local Instantaneous information-based minimum mean-square error (GSLI-MMSE). The scenario with multi-antenna access points (APs) is considered over three different channel models: correlated Rician fading with fixed or random line-of-sight (LoS) phase-shifts, and correlated Rayleigh fading. With the aid of matrix inversion derivations, we can construct the conventional MMSE combining from the perspective of each AP, where global instantaneous information is involved. Then, for an arbitrary AP, we apply the statistics approximation methodology to approximate instantaneous terms related to other APs by channel statistics to construct the distributed combining scheme at each AP with local instantaneous information and global statistics. With the aid of uplink-downlink duality, we derive the respective GSLI-MMSE precoding schemes. Numerical results showcase that the proposed GSLI-MMSE scheme demonstrates performance comparable to the optimal centralized MMSE scheme, under the stable LoS conditions, e.g., with static users having Rician fading with a fixed LoS path.

\end{abstract}
%----------------------------keywords----------------------------
%\begin{IEEEkeywords}
%Beyond 5G network, wireless content caching, cell-free massive MIMO, submodular function.
%\end{IEEEkeywords}

%\newpage
\IEEEpeerreviewmaketitle
%\vspace{-0.2cm}
\section{Introduction}
Multiple-input multiple-output (MIMO) technology has been regarded as a vital component of wireless communication networks \cite{5595728,wang2025flexible,7827017}. Massive MIMO technology has achieved great success in fifth-generation (5G) wireless communication networks. Focusing on future wireless communication, an evolving MIMO technology, called cell-free massive MIMO (CF mMIMO), has been widely studied \cite{7827017,zhang2025performance,[162],chen2025channel}. The basic idea of CF mMIMO technology is to deploy a large number of access points (APs) over a wide coverage, which are connected to the central processing unit (CPU) via fronthaul links \cite{8000355,10522673,11049871}. By cooperating phase-coherently with the assistance of the CPU, APs can cooperatively provide uniform service quality for user equipments (UEs). Relying on this promising network topology, four main uplink processing schemes can be implemented in CF mMIMO networks \cite{[162]}, with different cooperation levels among the APs and assistance levels of the CPU. Among these uplink processing schemes, the fully centralized processing scheme and the large-scale fading decoding (LSFD) processing scheme have received considerable attention due to their excellent spectral efficiency (SE) performance.

To effectively exploit the potential capability of CF mMIMO networks, the beamforming design, including the uplink combining design and downlink precoding design, is of great importance. One promising beamforming choice is the minimum mean-square error (MMSE)-type scheme. For the uplink, as discussed in \cite{[162]}, the centralized processing scheme with centralized MMSE (C-MMSE) combining can achieve the best SE performance among various feasible processing schemes. However, the C-MMSE combining is constructed at the CPU to minimize the global decoding mean-square error (MSE), utilizing the global instantaneous channel state information (CSI). A local MMSE (L-MMSE) combining scheme can also be derived based on only local CSI at each AP \cite{[162]}, and it achieves good SE performance when used in combination with LSFD. As for the downlink, by utilizing the channel reciprocity between uplink and downlink channels, the respective MMSE-type precoding schemes can also be derived, which have been validated to achieve excellent downlink SE performance \cite{9064545}.

Although C-MMSE beamforming can achieve excellent SEs, an instantaneous CSI-based matrix inversion is required for each coherence block, which is computationally demanding when the network is large. The L-MMSE beamforming embraces lower computational complexity than the C-MMSE beamforming, but with the sacrifice of performance loss. To balance the achievable performance and computational complexity, one potential idea is to substitute some instantaneous CSI-based terms with channel statistics-based terms, which vary much slowly than the instantaneous CSI. Following this concept, some works have studied low-complexity beamforming design \cite{polegre2021pilot,OBETrans}. The authors in \cite{polegre2021pilot} considered CF mMIMO networks with single-antenna APs and UEs over Rayleigh channels and studied two beamforming schemes. The first one was reduced-complexity MMSE (RC-MMSE), which was implemented by approximating some instantaneous terms with large-scale fading coefficients using statistics approximation. The second one was generalized maximum ratio (GMR), which was constructed by replacing the matrix inversion term in the C-MMSE formulation with the statistics-based matrix after the optimization. Furthermore, motivated by the GMR scheme in \cite{polegre2021pilot}, the authors in \cite{OBETrans} investigated three optimal bilinear equalizer (OBE) combining schemes for CF mMIMO networks with multi-antenna APs over the spatially correlated Rician channel. It can be observed from \cite{OBETrans} that these statistics-driven schemes can achieve high SEs, especially in scenarios without phase-shifts or perfectly mitigated phase-shifts in the line-of-sight (LoS) term. This is because there are many global statistical cross-terms among different APs or UEs for the scenario without phase-shifts, which can be effectively utilized by OBE schemes to enhance the SE. However, the involvement of random phase-shifts makes many cross-terms equal to zero. 

Motivated by these observations, following the potential statistics-substituted idea, it would be insightful to explore some other MMSE-type schemes. These new schemes are positioned between the C-MMSE and L-MMSE schemes to balance the performance-complexity trade-off. One possible new distributed MMSE-type scheme can be formulated through the C-MMSE scheme with the aid of statistics approximation to approximate some global instantaneous terms by global statistics. In this paper, we derive a statistics approximation-enabled distributed beamforming scheme, called Global Statistics \& Local Instantaneous information-based MMSE (GSLI-MMSE), for CF mMIMO networks with multi-antenna APs and analyze it for three different channel models. The major contributions are as follows. First, we utilize the statistics approximation methodology to approximate global instantaneous CSI-based terms in the C-MMSE combining expression by global statistics on a per-AP basis. By doing so, we derive the distributed GSLI-MMSE combining scheme, which can be implemented based on local instantaneous CSI at each AP and some limited channel statistics information obtained from other APs. Then, we derive the GSLI-MMSE uplink combining and downlink precoding schemes for scenarios with multi-antenna APs over three different channel models varying from the features of line-of-sight (LoS) components: 1) correlated Rician fading with random phase-shifts or 2) correlated Rician fading without random phase-shifts in the LoS path, and 3) correlated Rayleigh fading.

\addtolength{\topmargin}{0.01in}
% \begin{itemize}
% \item We utilize the statistics approximation methodology to approximate global instantaneous CSI-based terms in the C-MMSE combining by channel statistics for each AP. By doing so, we derive the distributed GSLI-MMSE combining scheme, which can be implemented based on local instantaneous CSI at each AP and some required channel statistics information received from other APs.
% \item We derive the GSLI-MMSE combining schemes for scenarios with multi-antenna APs over three different channel models varying from the features of line-of-sight (LoS) components: correlated Rician channel with or without random LoS phase-shifts and correlated Rayleigh channel without LoS components. Indeed, based on the uplink-downlink duality, we derive the respective GSLI-MMSE precoding schemes over three channel models.
% \item Numerical results observe that these channel statistics-driven beamforming schemes perform most efficiently when LoS conditions are stable, e.g., static UEs scenario modelled by the Rician channel without random phase-shifts. This observation can bring some useful guidelines for the beamforming design over transmission scenarios with stable LoS conditions.
% \end{itemize}

\section{Fundamentals of CF mMIMO Systems}\label{sec:system}
In this paper, we study a CF mMIMO network that contains $M$ APs and $K$ UEs. The APs and UEs are arbitrarily located in a wide coverage area, where each AP and UE is equipped with $N$ antennas and a single antenna, respectively. We denote the channel response between AP $m$ and UE $k$ as $\mathbf{h}_{mk}\in \mathbb{C} ^N$, which is assumed to remain constant in each time-frequency coherence block with $\tau_c$ length. Moreover, we assume that the channel responses for different AP-UE pairs are independent and that the channel responses in different coherence blocks are independent and identically distributed. We consider the spatially correlated Rician fading channel model, where the channel response between AP $m$ and UE $k$ is modeled as
\begin{equation}\label{Rician_PS_Channel}
\mathbf{h}_{mk}=\overline{\mathbf{h}}_{mk}e^{j\theta _{mk}}+\check{\mathbf{h}}_{mk},
\end{equation}
where $\overline{\mathbf{h}}_{mk}$ is the deterministic LoS component (i.e., a scaled array response vector), $\theta _{mk}\sim \mathcal{U} [ -\pi ,\pi ]$ represents the arbitrary phase-shift between AP $m$ and UE $k$, $\check{\mathbf{h}}_{mk}\sim \mathcal{N} _{\mathbb{C}}\left( \mathbf{0},\mathbf{R}_{mk} \right)$ is the non-LoS small-scale component with $\mathbf{R}_{mk}\in \mathbb{C} ^{N\times N}$ being the spatial correlation matrix. %Moreover, we denote $\beta _{mk}^{\mathrm{NLoS}}={\mathrm{tr}( \mathbf{R}_{mk} )}/{N}$ as the NLoS large-scale fading coefficient between AP $m$ and UE $k$. 
If the UE moves, then the phase-shifts $\theta _{mk}$ vary over time at the same pace as the small-scale fading. For the scenarios with the low UE mobility or perfect estimation of these phase-shifts, \eqref{Rician_PS_Channel} can reduce to another Rician fading channel model as
$
\mathbf{h}_{mk}=\overline{\mathbf{h}}_{mk}+\check{\mathbf{h}}_{mk}.
$
Furthermore, in urban scenarios, there may not exist a LoS path between each AP-UE pair. This leads to the spatially correlated Rayleigh fading channel model with
$
\mathbf{h}_{mk}=\check{\mathbf{h}}_{mk},
$
by letting $\overline{\mathbf{h}}_{mk}=\bf{0}$. Since the channel model in \eqref{Rician_PS_Channel} is the most general one, we will use the same symbol to represent all channel models, but particularize our results when suitable. 

%\vspace{-0.2cm}
\subsection{Uplink Transmission}
%\vspace{-0.2cm}
In the uplink channel estimation phase, we utilize $\tau _p$ mutually orthogonal pilot signals to estimate the channels: $\boldsymbol{\phi} _1,\dots ,\boldsymbol{\phi} _{\tau _p}$  with $\left\| \boldsymbol{\phi} _t \right\| ^2=\tau _p$. We define $\boldsymbol{\phi} _{t_{k}}$ as the pilot signal transmitted by UE $k$. If the phase $\theta_{mk}$ can be tracked, based on the fundamentals in \cite{OBETrans}, we can compute the phase-shifts-aware MMSE estimate of $\mathbf{h}_{mk}$ as
\begin{equation}\label{CE}
\widehat{\mathbf{h}}_{mk}=\overline{\mathbf{h}}_{mk}e^{j\theta _{mk}}+\sqrt{p_k}\mathbf{R}_{mk}\mathbf{\Psi }_{mk}^{-1}\left( \mathbf{y}_{mk}^{p}-\overline{\mathbf{y}}_{mk}^{p} \right), 
\end{equation}
where $\mathbf{\Psi }_{mk}={\mathbb{E} \{ ( \mathbf{y}_{mk}^{p}-\overline{\mathbf{y}}_{mk}^{p} ) \left( \mathbf{y}_{mk}^{p}-\overline{\mathbf{y}}_{mk}^{p} \right) ^H \}}/{\tau _p}=\sum_{l\in \mathcal{P} _k}{p_l\tau _p\mathbf{R}_{ml}}+\sigma ^2\mathbf{I}_N$, $\mathbf{y}_{mk}^{p}=\sum_{l\in \mathcal{P} _k}{\sqrt{p_l}\tau _p\mathbf{h}_{ml}}+\mathbf{n}_{mk}^{p}$, and $\overline{\mathbf{y}}_{mk}^{p}=\sum_{l\in \mathcal{P} _k}{\sqrt{p_l}\tau _p\overline{\mathbf{h}}_{ml}}e^{j\theta _{ml}}$, $\mathbf{n}_{mk}^{p}=\mathbf{N}_{m}^{p}\boldsymbol{\phi} _{k}^{*}\sim \mathcal{N} _{\mathbb{C}}( \mathbf{0},\tau _p\sigma ^2\mathbf{I}_N ) $. Here, $\mathcal{P} _k$ denotes the subset of UEs that apply the same pilot signal $\boldsymbol{\phi} _{t_{k}}$ as UE $k$, $p_k$ denotes the transmit power of UE $k$, and $\sigma ^2$ is the noise power. The channel estimation error is defined as $\tilde{\mathbf{h}}_{mk}=\mathbf{h}_{mk}-\widehat{\mathbf{h}}_{mk} $, which is conditionally independent from the channel estimate $\widehat{\mathbf{h}}_{mk}$. Moreover, we have $\mathbb{E} \{ \widehat{\mathbf{h}}_{mk} |\theta _{mk} \} =\overline{\mathbf{h}}_{mk}e^{j\theta _{mk}}$, $\mathrm{Cov}\{ \widehat{\mathbf{h}}_{mk} |\theta _{mk} \} =\widehat{\mathbf{R}}_{mk}=p_k\tau _p\mathbf{R}_{mk}\mathbf{\Psi }_{mk}^{-1}\mathbf{R}_{mk}$, $\mathbb{E} \{ \tilde{\mathbf{h}}_{mk} \} =\mathbf{0}$, and $\mathrm{Cov} \{ \tilde{\mathbf{h}}_{mk} \} =\mathbf{C}_{mk}=\mathbf{R}_{mk}-p_k\tau _p\mathbf{R}_{mk}\mathbf{\Psi }_{mk}^{-1}\mathbf{R}_{mk}$. In each coherence blocks, $\tau_u$ symbols are used for uplink data transmission. All UEs transmit simultaneously and the received data signal at AP $m$ is
$
\mathbf{y}_m=\sum_{k=1}^K{\mathbf{h}_{mk}x_k}+\mathbf{n}_m\in \mathbb{C} ^N
$
with $\mathbf{n}_m\sim \mathcal{N} _{\mathbb{C}}( \mathbf{0},\sigma ^2\mathbf{I}_N ) $ and $x_k\sim \mathcal{N} _{\mathbb{C}}( 0,p_k) $ being the additive noise at AP $m$ and the uplink data symbol sent by UE $k$, respectively.

We consider both the centralized and distributed receive combining methods and will evaluate the resulting achievable uplink SE performance. For the centralized processing scheme, the decoding phase is implemented at the CPU, while all APs serve only as relays. More specifically, all pilot signals and uplink data signals received at all APs are transmitted to the CPU via the fronthauls. Thus, the received uplink data signal at the CPU can be constructed as 
$
\mathbf{y}=\sum_{k=1}^K{\mathbf{h}_kx_k}+\mathbf{n},
$
where $\mathbf{y}=[ \mathbf{y}_{1k}^{T},\dots ,\mathbf{y}_{Mk}^{T} ] ^T\in \mathbb{C} ^{MN}$ is the collective received signal and $\mathbf{h}_k=\left[ \mathbf{h}_{1k}^{T},\dots ,\mathbf{h}_{Mk}^{T} \right] ^T=\overline{\mathbf{h}}_{k}^{\prime}+\check{\mathbf{h}}_k\in \mathbb{C} ^{MN}$ represents the collective channel for UE $k$ with $\overline{\mathbf{h}}_{k}^{\prime}=[ \overline{\mathbf{h}}_{1k}^{T}e^{j\theta _{1k}},\dots ,\overline{\mathbf{h}}_{Mk}^{T}e^{j\theta _{Mk}} ] ^T$ and $\check{\mathbf{h}}_k=[ \check{\mathbf{h}}_{1k}^{T},\dots ,\check{\mathbf{h}}_{Mk}^{T} ] ^T$. The CPU can then implement centralized channel estimation for $\mathbf{h}_k$ based on the collective pilot-related signal $\mathbf{y}_{k}^{p}=[ \mathbf{y}_{1k}^{p,T},\dots ,\mathbf{y}_{Mk}^{p,T} ] ^T\in \mathbb{C} ^{MN}$. More specifically, the estimate of $\mathbf{h}_k$ can be computed as
\begin{equation}\label{CE_CPU}
\widehat{\mathbf{h}}_k=\overline{\mathbf{h}}_{k}^{\prime}+\sqrt{p_k}\mathbf{R}_k\mathbf{\Psi }_{k}^{-1}( \mathbf{y}_{k}^{p}-\overline{\mathbf{y}}_{k}^{p} ),
\end{equation}
where $\mathbf{R}_k=\mathbb{E} \{ \check{\mathbf{h}}_k\check{\mathbf{h}}_{k}^{H} \} =\mathrm{diag}( \mathbf{R}_{1k},\dots ,\mathbf{R}_{Mk} ) \in \mathbb{C} ^{MN\times MN}$ is block-diagonal since the channels are independent for different AP-UE pairs, $\overline{\mathbf{y}}_{k}^{p}=[ \overline{\mathbf{y}}_{1k}^{p,T},\dots ,\overline{\mathbf{y}}_{Mk}^{p,T} ] \in \mathbb{C} ^{MN}$, and $\mathbf{\Psi }_k=\mathrm{diag}( \mathbf{\Psi }_{1k},\dots ,\mathbf{\Psi }_{Mk} ) \in \mathbb{C} ^{MN\times MN}$. Based on \eqref{CE_CPU}, the CPU can select an arbitrary combining scheme $\mathbf{v}_k\in \mathbb{C} ^{MN}$ to derive the decoding estimate of $x_k$ with $k=1,\dots ,K $, which can be denoted as
$
\widehat{x}_k=\mathbf{v}_{k}^{H}\mathbf{y}=\mathbf{v}_{k}^{H}\mathbf{h}_kx_k+\sum_{l\ne k}^K{\mathbf{v}_{k}^{H}\mathbf{g}_lx_l}+\mathbf{v}_{k}^{H}\mathbf{n}.
$
By using the use-and-then-forget (UatF) bound \cite[Ch.~4]{8187178}, we can quantify the achievable uplink SE for UE $K$ as
$
\mathrm{SE}_{k}^{\mathrm{c},\mathrm{ul}}=\frac{\tau_u}{\tau_c}\log _2\left( 1+\mathrm{SINR}_{k}^{\mathrm{c},\mathrm{ul}} \right), 
$
where $\mathrm{SINR}_{k}^{\mathrm{c},\mathrm{ul}}$ is the effective signal-to-interference-plus-noise ratio (SINR) for UE $k$, given as
\begin{equation}\label{SINR_centrlized_UatF}
\begin{aligned}
&\mathrm{SINR}_{k}^{\mathrm{c},\mathrm{ul}}=\\
&\frac{p_k| \mathbb{E} \{ \mathbf{v}_{k}^{H}\mathbf{h}_k \} |^2}{\sum\limits_{l=1}^K{p_l\mathbb{E} \{ | \mathbf{v}_{k}^{H}\mathbf{h}_l |^2 \}}-p_k| \mathbb{E} \{ \mathbf{v}_{k}^{H}\mathbf{h}_k \} |^2+\sigma ^2\mathbb{E} \{ \| \mathbf{v}_k \| ^2 \}},
\end{aligned}
\end{equation}
where the expectations consider all sources of randomness. %The proof of this capacity bound can be found in \cite[Appendix C.3.4]{8187178}.

With centralized receiver processing, the desirable  combining scheme is the C-MMSE combining, defined as
\begin{equation}\label{CMMSE}
\mathbf{v}_k=p_k\left( \sum_{l=1}^K{p_l\left( \widehat{\mathbf{h}}_l\widehat{\mathbf{h}}_{l}^{H}+\mathbf{C}_l \right)}+\sigma ^2\mathbf{I}_{MN} \right) ^{-1}\widehat{\mathbf{h}}_k,
\end{equation}
where $\mathbf{C}_{l}={\mathbf{R}}_{l}-p_l\tau _p{\mathbf{R}}_{l}\mathbf{\Psi }_{l}^{-1}{\mathbf{R}}_{l}=\mathrm{diag}( \mathbf{C}_{1l},\dots ,\mathbf{C }_{Ml} )\in \mathbb{C} ^{MN\times MN}$ is the covariance matrix of $\tilde{\mathbf{h}}_l=\mathbf{h}_l-\widehat{\mathbf{h}}_l$. The C-MMSE combining scheme in \eqref{CMMSE} minimizes the conditional mean-squared error $\mathrm{MSE}_k=\mathbb{E} \{ |  x_k-\mathbf{v}_{k}^{H}\mathbf{y} |^2 |\{ \widehat{\mathbf{h}}_k ,\theta_k\} \} $ with $\theta_k$ being the collection of phase-shifts for UE $k$.\footnote{C-MMSE also maximizes the achievable SE when using the instantaneous information-based capacity bound in \cite[Eq. (11)]{[162]}.}

If distributed receiver processing is used, each AP can compute its channel estimates in \eqref{CE} locally and perform local soft decoding. More specifically, based on the local channel estimates $\widehat{\mathbf{h}}_{mk}$, AP $m$ can design a local combining scheme $\mathbf{v}_{mk}$ for UE $k$ with $k=1,\dots ,K $. Based on the received data signal $\mathbf{y}_m$, the local estimate of $x_k$ at AP $m$ can be computed as
$
\widehat{x}_{mk}=\mathbf{v}_{mk}^{H}\mathbf{y}_m\!=\!\mathbf{v}_{mk}^{H}\mathbf{h}_{mk}x_k\!\!+\!\!\sum_{l\ne k}^K{\mathbf{v}_{mk}^{H}\mathbf{h}_{ml}x_l}\!+\!\mathbf{v}_{mk}^{H}\mathbf{n}_m.
$
One well-studied local combining scheme is the local MMSE combining (L-MMSE) given by
\begin{equation}\label{local_MMSE}
\begin{aligned}
\mathbf{v}_{mk}=p_k\left( \sum_{l=1}^K{p_l\left( \widehat{\mathbf{h}}_{ml}\widehat{\mathbf{h}}_{ml}^{H}+\mathbf{C}_{ml} \right)}+\sigma ^2\mathbf{I}_N \right) ^{-1}\widehat{\mathbf{h}}_{mk},
\end{aligned}
\end{equation}
which minimizes the local mean-squared error $\mathrm{MSE}_{mk}=\mathbb{E} \{ |  x_k-\mathbf{v}_{mk}^{H}\mathbf{y}_m |^2 |\{ \widehat{\mathbf{h}}_{mk},\theta_{mk}\} \} $. Then, all APs transmit the local decoding data to the CPU, where the CPU can compute a weighted estimate of $x_k$ based on the LSFD coefficients $\{ a_{mk}:m=1,\dots ,M \} $ as 
$
\check{x}_k=\sum_{m=1}^M{a_{mk}^{*}\widehat{x}_{mk}}=( \sum_{m=1}^M{a_{mk}^{*}\mathbf{v}_{mk}^{H}\mathbf{h}_{mk}} ) x_k+\sum_{l\ne k}^K{\sum_{m=1}^M{a_{mk}^{*}\mathbf{v}_{mk}^{H}\mathbf{h}_{ml}x_l}}+\mathbf{n}_{k}^{\prime}
$
with $\mathbf{n}_{k}^{\prime}=\sum_{m=1}^M{a_{mk}^{*}\mathbf{v}_{mk}^{H}\mathbf{n}_m}$. Based on $\check{x}_k$, we can compute the SE for UE $k$ as 
$
\mathrm{SE}_{k}^{\mathrm{d,ul}}=\frac{\tau _d}{\tau _c}\log _2(1+\mathrm{SINR}_{k}^{\mathrm{d,ul}}),
$
with 
\begin{equation}\label{SINR_LSFD}
\mathrm{SINR}_{k}^{\mathrm{d,ul}}\!\!=\!\!\frac{p_k\left| \mathbf{a}_{k}^{H}\mathbb{E} \left\{ \mathbf{d}_{kk} \right\} \right|^2}{\mathbf{a}_{k}^{H}\!( \sum\limits_{l=1}^K{p_l\mathbf{\Theta }_{kl}}\!-\!p_k\mathbb{E} \left\{ \mathbf{d}_{kk} \right\} \mathbb{E} \left\{ \mathbf{d}_{kk}^{H} \right\} +\sigma ^2\mathbf{D}_k ) \mathbf{a}_k}
\end{equation}
being the effective SINR for UE $k$ in this case, where $\mathbf{d}_{kk}=[ \mathbf{v}_{1k}^{H}\mathbf{h}_{1k},\dots ,\mathbf{v}_{Mk}^{H}\mathbf{h}_{Mk} ] ^T\in \mathbb{C} ^M$, $\mathbf{a}_k=[ a_{1k},\dots ,a_{Mk} ] \in \mathbb{C} ^M$, and $\mathbf{D}_k=\mathrm{diag}[ \mathbb{E} \{ \| \mathbf{v}_{1k} \| ^2 \} ,\dots ,\mathbb{E} \{ \| \mathbf{v}_{Mk} \| ^2 \} ] \in \mathbb{C} ^{M\times M}$. Moreover, we have $\mathbf{\Theta }_{kl}\in \mathbb{C} ^{M\times M}$, where the $(n,m)$-th element of $\mathbf{\Theta }_{kl}$ is  $[ \mathbf{\Theta }_{kl} ] _{nm}=\mathbb{E} \{( \mathbf{v}_{mk}^{H}\mathbf{h}_{ml} ) ^H( \mathbf{v}_{nk}^{H}\mathbf{h}_{nl} ) \}$. It is worth noting that \eqref{SINR_LSFD} is maximized by 
$
\mathbf{a}_{k}^{*}\!\!=\!\!( \sum_{l=1}^K{p_l\mathbf{\Theta }_{kl}}-p_k\mathbb{E} \left\{ \mathbf{d}_{kk} \right\} \mathbb{E} \left\{ \mathbf{d}_{kk}^{H} \right\} +\sigma ^2\mathbf{D}_k ) ^{-1}\mathbb{E} \left\{ \mathbf{d}_{kk} \right\}.
$
leading to $\mathrm{SINR}_{k}^{\mathrm{d,ul},*}=p_k\mathbb{E} \left\{ \mathbf{d}_{kk}^{H} \right\} \mathbf{a}_{k}^{*}$.

\subsection{Downlink Transmission}

In the downlink transmission phase, $\tau_d$ data symbols are transmitted. We assume the channel reciprocity between the uplink and downlink channels. We denote the downlink transmitted signal from AP $m$ by 
$
\mathbf{x}_m=\sum_{k=1}^K{\mathbf{g}_{mk}\varsigma _k}\in \mathbb{C} ^{N}.
$
with $\mathbf{g}_{mk}\in \mathbb{C} ^{N}$ being the downlink precoding vector for UE $k$ at AP $m$ and $\varsigma _k\sim \mathcal{N} _{\mathbb{C}}( 0,1 )$ being the downlink data symbol for UE $k$. We can further represent $\mathbf{g}_{mk}$ as $\mathbf{g}_{mk}=\mu _{mk}\overline{\mathbf{g}}_{mk}$, where $\overline{\mathbf{g}}_{mk}$ denotes an arbitarily scaled precoding vector for UE $k$ at AP $m$ and $\mu _{mk}=\sqrt{{p_{mk}}/{\mathbb{E} \{ \| \overline{\mathbf{g}}_{mk} \| ^2 \}}}$ denotes the power factor between AP $m$ and UE $k$ with 
$p_{mk}$ being the allocated power for UE $k$ at AP $m$. The transmit power is constrained by $\sum_{k=1}^K{\mathbb{E} \{ \| \mathbf{g}_{mk} \| ^2 \}}\leqslant p_m$, where $p_m$ denotes the total downlink transmit power by AP $m$. Thanks to the channel reciprocity and resulting uplink-downlink duality, the downlink precoding scheme can be designed based on the uplink combining scheme \cite{8187178}. The received signal for UE $k$ can be formulated as
\begin{equation}
\begin{aligned}
y_k^{\mathrm{dl}}&=\sum_{m=1}^M{\mathbf{h}_{mk}^{H}\mathbf{x}_m}+n_k\\
&=\sum_{m=1}^M{\mathbf{h}_{mk}^{H}\mathbf{g}_{mk}}\varsigma _k+\sum_{l\ne k}^K{\sum_{m=1}^M{\mathbf{h}_{mk}^{H}\mathbf{g}_{ml}}\varsigma _l}+n_k,
\end{aligned}
\end{equation}
where $n_k\sim \mathcal{N} _{\mathbb{C}}( 0,\sigma ^2) $ denotes the downlink noise at UE $k$ with noise power $\sigma ^2$. Then, the achievable downlink SE for UE $k$ can be denoted as 
$
\mathrm{SE}_{k}^{\mathrm{dl}}=\frac{\tau _d}{\tau _c}\log _2( 1+\mathrm{SINR}_{k}^{\mathrm{dl}}),
$
where the downlink SINR is computed based on the UatF bound as
\begin{equation}\label{SINR_DL}
\mathrm{SINR}_{k}^{\mathrm{dl}}\!\!=\!\!\frac{ \left| \sum\limits_{m=1}^M{\mathbb{E} \{ \mathbf{g}_{mk}^{H}\mathbf{h}_{mk} \}} \right|^2}{\sum\limits_{l=1}^K{\mathbb{E} \left\{ \left| \sum\limits_{m=1}^M{\mathbf{g}_{ml}^{H}\mathbf{h}_{mk}} \right|^2 \! \right\} -} \left| \!\sum\limits_{m=1}^M{\mathbb{E} \{ \mathbf{g}_{mk}^{H}\mathbf{h}_{mk} \}} \right|^2\!\!+\sigma ^2}.
\end{equation}

\section{Distributed GSLI-MMSE Beamforming}

The results in the previous section are well-established in the CF mMIMO literature. However, even if the C-MMSE combining in \eqref{CMMSE} is optimal in terms of achievable SE, the computation of the C-MMSE combining vectors necessitates global instantaneous information and leads to high computational complexity since the matrix inversion is high-dimensional. In this section, we develop a novel low-complexity distributed beamforming design that takes inspiration from C-MMSE.
To this end, we note that \eqref{CMMSE} can be reformulated as
\begin{equation}\label{CMMSE_Construct}
\mathbf{v}_k=p_k\left( \widehat{\mathbf{H}}\mathbf{P}\widehat{\mathbf{H}}^H+\mathbf{W} \right) ^{-1}\widehat{\mathbf{H}}\mathbf{e}_k,
\end{equation}
where $\widehat{\mathbf{H}}\triangleq [ \widehat{\mathbf{h}}_1,\dots ,\widehat{\mathbf{h}}_K ] \in \mathbb{C} ^{MN\times K}$, $\mathbf{P}=\mathrm{diag}( p_1,\dots ,p_K ) \in \mathbb{C} ^{K\times K}$, 
\begin{equation}
\mathbf{W}=\sum_{l=1}^K{p_l \mathbf{C}_l }+\sigma ^2\mathbf{I}_{MN}\in \mathbb{C} ^{MN\times MN},
\end{equation}
and $\mathbf{e}_k$ denotes the $k$-th column of $\mathbf{I}_K$.
Then, based on the matrix inversion lemma \cite[Lemma B.3]{8187178}
\begin{equation}\label{matrix_inversion}
( \mathbf{D}+\mathbf{EFE}^H ) ^{-1}\mathbf{E}=\mathbf{D}^{-1}\mathbf{E}( \mathbf{E}^H\mathbf{D}^{-1}\mathbf{E}+\mathbf{F}^{-1} ) ^{-1}\mathbf{F}^{-1},
\end{equation}
with $\mathbf{D}\in \mathbb{C} ^{N_1\times N_1}$, $\mathbf{E}\in \mathbb{C} ^{N_1\times N_2}$, and $\mathbf{F}\in \mathbb{C} ^{N_1\times N_2}$,
we can further reformulate \eqref{CMMSE_Construct} as
\begin{equation}\label{CMMSE_Construct_inverse}
\mathbf{v}_k=p_k\mathbf{W}^{-1}\widehat{\mathbf{H}}\left( \widehat{\mathbf{H}}^H\mathbf{W}^{-1}\widehat{\mathbf{H}}+\mathbf{P}^{-1} \right) ^{-1}\mathbf{P}^{-1}\mathbf{e}_k,
\end{equation}
by letting the matrices in \eqref{CMMSE_Construct} be $\mathbf{D}=\mathbf{W}$, $\mathbf{E}=\widehat{\mathbf{H}}$, and $\mathbf{F}=\mathbf{P}$, respectively. %Note that the matrix inversion result \eqref{matrix_inversion} can be easily derived based on the standard matrix inversion method in \cite[Lemma B.3]{8187178}. 
Note that, since $\mathbf{C}_l$ and $\mathbf{I}_{MN}$ are block-diagonal matrices, $\mathbf{W}$ is also a block-diagonal matrix, which can be constructed as 
\begin{equation}
\mathbf{W}=\mathrm{diag}( \mathbf{W}_1,\dots ,\mathbf{W}_M ),
\end{equation}
where $\mathbf{W}_m=\sum_{l=1}^K{p_l\mathbf{C}_{ml}}+\sigma ^2\mathbf{I}_N\in \mathbb{C} ^{N\times N}$. By now utilizing the fact that 
$\mathbf{v}_k=[ \mathbf{v}_{1k}^{T},\dots ,\mathbf{v}_{Mk}^{T} ] ^T$ and relying on the block-diagonal feature of $\mathbf{W}$, we can represent $\mathbf{v}_{mk}$ as
\begin{equation}\label{vmk}
\mathbf{v}_{mk}=p_k\mathbf{W}_{m}^{-1}\widehat{\mathbf{H}}_m\left( \widehat{\mathbf{H}}^H\mathbf{W}^{-1}\widehat{\mathbf{H}}+\mathbf{P}^{-1} \right) ^{-1}\mathbf{e}_k,
\end{equation}
where $\widehat{\mathbf{H}}_m=[ \widehat{\mathbf{h}}_{m1},\dots ,\widehat{\mathbf{h}}_{mK} ] \in \mathbb{C} ^{N\times K}$ represents the local channel estimate matrix at AP $m$. 

We notice that each AP can compute its vector $\mathbf{v}_{mk}$  in a distributed manner using \eqref{vmk}, but it requires knowledge of the global instantaneous channel matrix $\widehat{\mathbf{H}}$, which must be sent to each AP $m$ via the fronthaul, leading to a significant signaling transmission burden. We can further formulate \eqref{vmk} as
\begin{equation}\label{vmk_derivation}
\begin{aligned}
\mathbf{v}_{mk}=p_k\mathbf{W}_{m}^{-1}\widehat{\mathbf{H}}_m \left( \widehat{\mathbf{H}}_{m}^{H}\mathbf{W}_{m}^{-1}\widehat{\mathbf{H}}_m\!\!+\!\!\!\sum_{m^{\prime}\ne m}^M{\mathbf{\Xi }_{m^{\prime}}}\!+\!\mathbf{P}^{-1} \right) ^{-1}\mathbf{e}_k
\end{aligned}
\end{equation}
where $\mathbf{\Xi }_{m^{\prime}}=\widehat{\mathbf{H}}_{m^{\prime}}^{H}\mathbf{W}_{m^{\prime}}^{-1}\widehat{\mathbf{H}}_{m^{\prime}}$. Note that \eqref{vmk_derivation} can be easily derived by using the notation 
\begin{equation}
\widehat{\mathbf{H}}\triangleq  \left[ \widehat{\mathbf{H}}_1;\dots ;\widehat{\mathbf{H}}_m;\dots ;\widehat{\mathbf{H}}_M \right] \in \mathbb{C} ^{MN\times K}
\end{equation} 
and relying on the block-diagonal feature of $\mathbf{W}$. As observed from \eqref{vmk_derivation}, for AP $m$, the required instantaneous terms from the other APs in \eqref{vmk} can be denoted by $\sum_{m^{\prime}\ne m}^M{\mathbf{\Xi }_{m^{\prime}}}$. 

\subsection{Proposed Method}

Instead of passing around the term $\sum_{m^{\prime}\ne m}^M{\mathbf{\Xi }_{m^{\prime}}}$ between the APs in each coherence block, we propose a new combining scheme by approximating this term by its average. Note that the statistics remain constant for each realization of APs/UEs locations, so it can be easily shared over the fronthaul. Our goal is to achieve SEs close to that with C-MMSE combining, while using the same amount of instantaneous information per AP as with L-MMSE combining.
We call our method \emph{Global Statistics \& Local Instantaneous information-based MMSE (GSLI-MMSE)} combining, and define it as
\begin{equation} \label{vmk_derivation2}
\mathbf{v}_{mk}=p_k\mathbf{W}_{m}^{-1}\widehat{\mathbf{H}}_m \! \!\left( \!\widehat{\mathbf{H}}_{m}^{H}\mathbf{W}_{m}^{-1}\widehat{\mathbf{H}}_m\!+\!\!\!\sum_{m^{\prime}\ne m}^M\!\!\!{\mathbb{E} \left\{ \mathbf{\Xi }_{m^{\prime}} \right\} \!}+\mathbf{P}^{-1} \!\!\right)^{\!\!-1} \!\!\!\mathbf{e}_k.
\end{equation}
The expectation can be computed as follows.

\begin{prop}\label{prop_GSLI}
The average of $\mathbf{\Xi }_{m^{\prime}} \in \mathbb{C} ^{K\times K}$ in \eqref{vmk_derivation2} is called $\overline{\mathbf{\Xi }}_{m^{\prime}}\in \mathbb{C} ^{K\times K}$.
For the Rician fading channel model in \eqref{Rician_PS_Channel},
the $(k,l)$-th element of $\overline{\mathbf{\Xi }}_{m^{\prime}}$ is given by 
\begin{equation}
\begin{aligned}\label{approximate_Rician_ps}
&\left[ \overline{\mathbf{\Xi }}_{m^{\prime}} \right] _{kl}=\\ 
&\begin{cases}
	\begin{array}{l}
	0,\,\,\,\mathrm{if}\,\,l\notin \mathcal{P} _k\,\,\\
	\!\!\!\!\!\sqrt{p_kp_l}\tau _p\mathrm{tr}\left( \mathbf{R}_{m^{\prime} l}\mathbf{\Psi }_{m^{\prime} k}^{-1}\mathbf{R}_{m^{\prime} k}\mathbf{W}_{m^{\prime}}^{-1} \right) ,\,\,\,\mathrm{if}\,\,l\in \mathcal{P} _k\backslash \left\{ k \right\}\\
\end{array}\\
	\mathrm{tr}\!\left( \mathbf{W}_{m^{\prime}}^{-1}\overline{\mathbf{H}}_{m^{\prime} k} \right) +\!\!p_k\tau _p\mathrm{tr}\left( \mathbf{R}_{m^{\prime} k}\mathbf{\Psi }_{m^{\prime} k}^{-1}\mathbf{R}_{m^{\prime} k}\mathbf{W}_{m^{\prime}}^{-1} \right) \!,\mathrm{if}\,l=k\\
\end{cases}
\end{aligned}
\end{equation}
where $\overline{\mathbf{H}}_{m^{\prime} kk}=\overline{\mathbf{h}}_{m^{\prime} k}\overline{\mathbf{h}}_{m^{\prime} k}^{H}\in \mathbb{C} ^{N\times N}$.
\end{prop}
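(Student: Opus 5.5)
We compute the $(k,l)$ entry directly. Write $[\mathbf{\Xi}_{m'}]_{kl}=\widehat{\mathbf{h}}_{m'k}^{H}\mathbf{W}_{m'}^{-1}\widehat{\mathbf{h}}_{m'l}=\mathrm{tr}(\mathbf{W}_{m'}^{-1}\widehat{\mathbf{h}}_{m'l}\widehat{\mathbf{h}}_{m'k}^{H})$. Since $\mathbf{W}_{m'}$ depends only on statistics, the expectation passes inside the trace. The task therefore reduces to computing the cross-moment $\mathbb{E}\{\widehat{\mathbf{h}}_{m'l}\widehat{\mathbf{h}}_{m'k}^{H}\}$.

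From \eqref{CE}, decompose $\widehat{\mathbf{h}}_{m'k}=\overline{\mathbf{h}}_{m'k}e^{j\theta_{m'k}}+\mathbf{z}_{m'k}$. Here $\mathbf{z}_{m'k}=\sqrt{p_k}\mathbf{R}_{m'k}\mathbf{\Psi}_{m'k}^{-1}(\mathbf{y}^p_{m'k}-\overline{\mathbf{y}}^p_{m'k})$. Also $\mathbf{y}^p_{m'k}-\overline{\mathbf{y}}^p_{m'k}=\sum_{i\in\mathcal{P}_k}\sqrt{p_i}\tau_p\check{\mathbf{h}}_{m'i}+\mathbf{n}^p_{m'k}$, which is zero-mean Gaussian. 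This term is independent of the phases, since the NLoS parts, the noise and the phases are mutually independent.

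The cross-moment then splits into four terms.

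\emph{LoS--LoS term.} It equals $\overline{\mathbf{h}}_{m'l}\overline{\mathbf{h}}_{m'k}^{H}\mathbb{E}\{e^{j(\theta_{m'l}-\theta_{m'k})}\}$. This vanishes for $l\neq k$, because the phases are independent and uniform. It equals $\overline{\mathbf{H}}_{m'kk}$ for $l=k$.

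\emph{LoS--NLoS cross terms.} These vanish because $\mathbf{z}$ is zero-mean and independent of the phases. The same holds if the phase is not random, since $\mathbf{z}$ is zero-mean.

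\emph{NLoS--NLoS term.} We need $\mathbb{E}\{\mathbf{z}_{m'l}\mathbf{z}_{m'k}^H\}$, and there are two cases.
\begin{itemize}
\item If $l\notin\mathcal{P}_k$, the two pilot-residual vectors involve disjoint sets of independent NLoS channels. They also involve independent noises, because the pilots are orthogonal: $\mathbf{N}^p_{m'}\boldsymbol{\phi}_{t_l}^*$ and $\mathbf{N}^p_{m'}\boldsymbol{\phi}_{t_k}^*$ are uncorrelated. Hence the term is zero.
\item If $l\in\mathcal{P}_k$, then $\mathbf{y}^p_{m'l}-\overline{\mathbf{y}}^p_{m'l}$ is the same vector as for $k$, and $\mathbf{\Psi}_{m'l}=\mathbf{\Psi}_{m'k}$. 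Its covariance is $\tau_p\mathbf{\Psi}_{m'k}$ by the definition of $\mathbf{\Psi}$. This gives $\mathbb{E}\{\mathbf{z}_{m'l}\mathbf{z}_{m'k}^H\}=\sqrt{p_lp_k}\tau_p\mathbf{R}_{m'l}\mathbf{\Psi}_{m'k}^{-1}\mathbf{R}_{m'k}$.
\end{itemize}

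Combining the pieces and using the cyclic property of the trace yields the three cases in \eqref{approximate_Rician_ps}. For $l=k$ this recovers $p_k\tau_p\mathrm{tr}(\mathbf{R}_{m'k}\mathbf{\Psi}_{m'k}^{-1}\mathbf{R}_{m'k}\mathbf{W}_{m'}^{-1})=\mathrm{tr}(\widehat{\mathbf{R}}_{m'k}\mathbf{W}_{m'}^{-1})$ plus the LoS trace. The trace is invariant to the Hermitian-symmetric ordering of $\mathbf{R}_{m'l}\mathbf{\Psi}^{-1}_{m'k}\mathbf{R}_{m'k}$ versus its transpose order only up to the cyclic trace identity, which is enough here.

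The main delicate step is the pilot-sharing case. There I must verify that $\mathbf{\Psi}_{m'l}=\mathbf{\Psi}_{m'k}$ and that the pilot residuals coincide when $l\in\mathcal{P}_k$. I must also verify that the noise terms of distinct orthogonal pilots are uncorrelated, which uses $\boldsymbol{\phi}_t^T\boldsymbol{\phi}_{t'}^*=0$. Finally, the zero for $l\notin\mathcal{P}_k$ under the Rician model relies on the random phases, whereas for the fixed-phase model the LoS cross term $\mathrm{tr}(\mathbf{W}_{m'}^{-1}\overline{\mathbf{h}}_{m'l}\overline{\mathbf{h}}_{m'k}^H)$ would survive. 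I would remark on this distinction when particularizing.
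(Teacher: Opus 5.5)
Your proposal is correct and follows essentially the same route as the paper's proof. Both split each local estimate into the phase-rotated LoS mean plus the zero-mean pilot-residual term, use the independent uniform phases to remove every LoS contribution except the $l=k$ diagonal one, and evaluate the NLoS cross-covariance case by case according to pilot sharing. You also fill in details the paper leaves implicit: the common residual and $\mathbf{\Psi}_{m'l}=\mathbf{\Psi}_{m'k}$ for $l\in\mathcal{P}_k$, and the uncorrelated noise projections for orthogonal pilots.

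Your closing remark about ``Hermitian-symmetric ordering'' is unnecessary. $\mathrm{tr}(\mathbf{W}_{m'}^{-1}\mathbf{R}_{m'l}\mathbf{\Psi}_{m'k}^{-1}\mathbf{R}_{m'k})$ already equals the paper's expression by plain cyclicity of the trace.
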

\begin{IEEEproof}
The goal is to compute $\mathbb{E} \{ \mathbf{\Xi }_{m^{\prime}} \}$. Note that the $(k,l)$-th element of $\mathbf{\Xi }_{m^{\prime}}$ can be denoted as 
\begin{equation}
[ \mathbf{\Xi }_{m^{\prime}} ] _{kl}=\widehat{\mathbf{h}}_{m^{\prime} k}^{H}\mathbf{W}_{m^{\prime}}^{-1}\widehat{\mathbf{h}}_{m^{\prime} l}.
\end{equation}
Thus, we can compute its expectation by considering the following three cases. For $l\notin \mathcal{P} _k$, we have 
$\mathbb{E} \{\widehat{\mathbf{h}}_{m^{\prime} k}^{H}\mathbf{W}_{m^{\prime}}^{-1}\widehat{\mathbf{h}}_{m^{\prime} l}\}=\mathbb{E} \{( \overline{\mathbf{h}}_{m^{\prime} k}^{ H}e^{-j\theta _{m^{\prime} k}}+\widehat{\mathbf{h}}_{m^{\prime} k,\left( 1 \right)}^{H} ) \mathbf{W}_{m^{\prime}}^{-1}( \overline{\mathbf{h}}_{m^{\prime} l}e^{j\theta _{m^{\prime} l}}+\widehat{\mathbf{h}}_{m^{\prime} l,\left( 1 \right)} ) \}=0$ 
since $\widehat{\mathbf{h}}_{m^{\prime} k,\left( 1 \right)} $ and $\widehat{\mathbf{h}}_{m^{\prime} l,\left( 1 \right)} $ are mutually independent with $\widehat{\mathbf{h}}_{m^{\prime} k,\left( 1 \right)}=\sqrt{p_k}\mathbf{R}_{m^{\prime} k}\mathbf{\Psi }_{m^{\prime} k}^{-1}(\mathbf{y}_{m^{\prime} k}^{p}-\bar{\mathbf{y}}_{m^{\prime} k}^{p})$. Since the phase-shifts in the LoS components vary between coherence blocks and are independent for different AP-UE pairs. Thus, all LoS-related expectation terms are zero. For $l\in \mathcal{P} _k\backslash \left\{ k \right\}$, $\widehat{\mathbf{h}}_{m^{\prime} k,\left( 1 \right)}$ and $\widehat{\mathbf{h}}_{m^{\prime} l,\left( 1 \right)}$ are correlated, we have 
\begin{equation}
\mathbb{E} \{ \widehat{\mathbf{h}}_{m^{\prime} l,\left( 1 \right)}\widehat{\mathbf{h}}_{m^{\prime} k,\left( 1 \right)}^{H} \} =\sqrt{p_kp_l}\tau _p\mathrm{tr}\left( \mathbf{R}_{m^{\prime}l}\mathbf{\Psi }_{m^{\prime}k}^{-1}\mathbf{R}_{m^{\prime}k}\mathbf{W}_{m^{\prime}}^{-1} \right).
\end{equation}
The other LoS-related terms are still $0$ due to random phase-shifts. Finally, for $l=k$, we have additional term $\mathrm{tr}( \mathbf{W}_{m^{\prime}}^{-1}\overline{\mathbf{H}}_{m^{\prime} k} )$ compared to the case $l\in \mathcal{P} _k\backslash \left\{ k \right\}$, which can be easily derived. We hence have obtained the result in \eqref{approximate_Rician_ps}.
\end{IEEEproof}

The expectations of many LoS-related terms are $0$ in Proposition~\ref{prop_GSLI} due to the existence of the random phase-shifts $\theta _{mk}$ that appear when the UE moves.
In scenarios where the fading is caused by moving multipath objects or the LoS phase-shifts can be tracked, \eqref{Rician_PS_Channel} reduces to $\mathbf{h}_{mk}=\overline{\mathbf{h}}_{mk}+\check{\mathbf{h}}_{mk}$. 
We then obtain the following proposition.

\begin{prop}\label{GSLI_wo_PS}
For spatially correlated Rician channels without phase-shifts, the average of $\mathbf{\Xi }_{m^{\prime}}$ in \eqref{vmk_derivation2} is called $\overline{\mathbf{\Xi }}_{m^{\prime} ,\left( 1 \right)}$ and the $(k,l)$-th element is given by
\begin{equation}
\begin{aligned}\label{term_wo_ps}
[\overline{\mathbf{\Xi }}_{m^{\prime},\left( 1 \right)}]_{kl}&=\mathrm{tr}\!\left( \mathbf{W}_{m^{\prime}}^{-1}\overline{\mathbf{H}}_{m^{\prime}lk} \right) \\
&+\begin{cases}
	0,\,\,\,\mathrm{if}\,\,l\notin \mathcal{P} _k\,\,\\
	\!\!\sqrt{p_kp_l}\tau _p\mathrm{tr}\!\left( \mathbf{R}_{m^{\prime}l}\mathbf{\Psi }_{m^{\prime}k}^{-1}\mathbf{R}_{m^{\prime}k}\mathbf{W}_{m^{\prime}}^{-1} \right) \!,\mathrm{if}\,l\!\in \!\mathcal{P} _k\\
\end{cases}
\end{aligned}
\end{equation}
where $\overline{\mathbf{H}}_{m^{\prime} lk}=\overline{\mathbf{h}}_{m^{\prime} l}\overline{\mathbf{h}}_{m^{\prime} k}^{H}\in \mathbb{C} ^{N\times N}$.
\end{prop}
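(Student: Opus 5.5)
We need $\mathbb{E}\{[\mathbf{\Xi}_{m'}]_{kl}\}=\mathbb{E}\{\widehat{\mathbf{h}}_{m'k}^H\mathbf{W}_{m'}^{-1}\widehat{\mathbf{h}}_{m'l}\}$. We follow the proof of Proposition~\ref{prop_GSLI}, but now $\theta_{m'k}=0$. The estimate in \eqref{CE} then reads $\widehat{\mathbf{h}}_{m'k}=\overline{\mathbf{h}}_{m'k}+\widehat{\mathbf{h}}_{m'k,(1)}$ with $\widehat{\mathbf{h}}_{m'k,(1)}=\sqrt{p_k}\mathbf{R}_{m'k}\mathbf{\Psi}_{m'k}^{-1}(\mathbf{y}_{m'k}^p-\overline{\mathbf{y}}_{m'k}^p)$. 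Here $\mathbf{W}_{m'}$ is deterministic, because it depends only on $\mathbf{C}_{m'l}$ and $\sigma^2$. Expanding the bilinear form gives four terms:
\[
\overline{\mathbf{h}}_{m'k}^H\mathbf{W}_{m'}^{-1}\overline{\mathbf{h}}_{m'l}
+\overline{\mathbf{h}}_{m'k}^H\mathbf{W}_{m'}^{-1}\widehat{\mathbf{h}}_{m'l,(1)}
+\widehat{\mathbf{h}}_{m'k,(1)}^H\mathbf{W}_{m'}^{-1}\overline{\mathbf{h}}_{m'l}
+\widehat{\mathbf{h}}_{m'k,(1)}^H\mathbf{W}_{m'}^{-1}\widehat{\mathbf{h}}_{m'l,(1)}.
\]

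\textbf{LoS and cross terms.} The first term is deterministic. By the cyclic property of the trace it equals $\mathrm{tr}(\mathbf{W}_{m'}^{-1}\overline{\mathbf{h}}_{m'l}\overline{\mathbf{h}}_{m'k}^H)=\mathrm{tr}(\mathbf{W}_{m'}^{-1}\overline{\mathbf{H}}_{m'lk})$. This term is present for every pair $(k,l)$, since no random phase is left to average it out; this is the key difference from Proposition~\ref{prop_GSLI}. The two cross terms vanish because $\widehat{\mathbf{h}}_{m'l,(1)}$ is zero-mean. Indeed, $\mathbf{y}_{m'k}^p-\overline{\mathbf{y}}_{m'k}^p=\sum_{i\in\mathcal{P}_k}\sqrt{p_i}\tau_p\check{\mathbf{h}}_{m'i}+\mathbf{n}_{m'k}^p$ has zero mean.

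\textbf{NLoS term.} The last term equals $\mathrm{tr}(\mathbf{W}_{m'}^{-1}\mathbb{E}\{\widehat{\mathbf{h}}_{m'l,(1)}\widehat{\mathbf{h}}_{m'k,(1)}^H\})$, so we need the cross-covariance of the two NLoS estimate parts.
\begin{itemize}
\item If $l\notin\mathcal{P}_k$, the two UEs use orthogonal pilots. The processed signals $\mathbf{y}_{m'k}^p-\overline{\mathbf{y}}_{m'k}^p$ and $\mathbf{y}_{m'l}^p-\overline{\mathbf{y}}_{m'l}^p$ then involve disjoint sets of independent channels. Their noises $\mathbf{N}_{m'}^p\boldsymbol{\phi}_{t_k}^*$ and $\mathbf{N}_{m'}^p\boldsymbol{\phi}_{t_l}^*$ are uncorrelated by pilot orthogonality. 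The expectation is therefore $0$.
\item If $l\in\mathcal{P}_k$, both estimates are built from the same signal, since $\mathcal{P}_l=\mathcal{P}_k$ gives $\mathbf{y}_{m'l}^p=\mathbf{y}_{m'k}^p$ and $\mathbf{\Psi}_{m'l}=\mathbf{\Psi}_{m'k}$. Using $\mathbb{E}\{(\mathbf{y}^p-\overline{\mathbf{y}}^p)(\mathbf{y}^p-\overline{\mathbf{y}}^p)^H\}=\tau_p\mathbf{\Psi}_{m'k}$, we get
\[
\mathbb{E}\{\widehat{\mathbf{h}}_{m'l,(1)}\widehat{\mathbf{h}}_{m'k,(1)}^H\}=\sqrt{p_kp_l}\,\tau_p\,\mathbf{R}_{m'l}\mathbf{\Psi}_{m'k}^{-1}\mathbf{R}_{m'k}.
\]
Multiplying by $\mathbf{W}_{m'}^{-1}$ and taking the trace gives the stated expression. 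For $l=k$ it reduces to $p_k\tau_p\mathrm{tr}(\mathbf{R}_{m'k}\mathbf{\Psi}_{m'k}^{-1}\mathbf{R}_{m'k}\mathbf{W}_{m'}^{-1})$, consistent with $\widehat{\mathbf{R}}_{m'k}$.
\end{itemize}
Adding the deterministic LoS term gives \eqref{term_wo_ps}.

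The main obstacle is this covariance step. One must verify that $\mathbf{\Psi}_{m'l}=\mathbf{\Psi}_{m'k}$ within a pilot group. One must also keep the order of the factors (and hence the Hermitian ordering) straight, so that the $(k,l)$ index convention $\widehat{\mathbf{h}}_{m'k}^H(\cdot)\widehat{\mathbf{h}}_{m'l}$ yields $\mathbf{R}_{m'l}\mathbf{\Psi}_{m'k}^{-1}\mathbf{R}_{m'k}$ rather than its transpose ordering. The remaining steps are routine.
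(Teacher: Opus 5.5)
Your proposal is correct and takes essentially the same route as the paper. You split $\widehat{\mathbf{h}}_{m'k}$ into the deterministic LoS part $\overline{\mathbf{h}}_{m'k}$ and the zero-mean part $\widehat{\mathbf{h}}_{m'k,(1)}$, observe that without random phases the term $\mathrm{tr}(\mathbf{W}_{m'}^{-1}\overline{\mathbf{H}}_{m'lk})$ survives for every $(k,l)$, and evaluate the NLoS term through the cross-covariance $\sqrt{p_kp_l}\tau_p\mathbf{R}_{m'l}\mathbf{\Psi}_{m'k}^{-1}\mathbf{R}_{m'k}$ according to pilot sharing. The paper leaves these details (zero-mean cross terms, uncorrelated noise under orthogonal pilots, and $\mathbf{\Psi}_{m'l}=\mathbf{\Psi}_{m'k}$ within a pilot group) to ``similar steps as in the proof of Proposition~\ref{prop_GSLI}'', and you have filled them in.
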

\begin{IEEEproof}
We have $\mathbf{h}_{mk}=\overline{\mathbf{h}}_{mk}+\check{\mathbf{h}}_{mk}$ with $\overline{\mathbf{h}}_{mk}$ being deterministic. Thus, the expectations of LoS-related terms in Proposition~\ref{prop_GSLI} become non-zero. We can compute \eqref{term_wo_ps} following similar steps as in the proof of Proposition~\ref{prop_GSLI}.
\end{IEEEproof}

\begin{rem}
The block-fading model normally rely on that coherence blocks are independent across both time and frequency, but the phase-shifts in the LoS path is constant over the frequency domain. Hence, for systems operating over a narrow time interval and using a wide bandwidth, it is feasible to perfectly estimate phase-shifts and compensate for them. In these situations, one should use model and results from Proposition~\ref{GSLI_wo_PS} instead of those in Proposition~\ref{prop_GSLI}.
\end{rem}

When the LoS path is zero, we obtain a Rayleigh fading channel model leading to the following special case.

\begin{coro}\label{GSLI_Rayleigh}
For correlated Rayleigh fading channels, the average of $\mathbf{\Xi }_{m^{\prime}}$ in \eqref{vmk_derivation2} is called $\overline{\mathbf{\Xi }}_{m^{\prime} ,\left( 2 \right)}$ and
the $(k,l)$-th element is $\sqrt{p_kp_l}\tau _p\mathrm{tr}( \mathbf{R}_{m^{\prime} l}\mathbf{\Psi }_{m^{\prime} k}^{-1}\mathbf{R}_{m^{\prime} k}\mathbf{W}_{m^{\prime}}^{-1} ) $ for $l\!\in \!\mathcal{P} _k$ and $0$ for $l\notin \mathcal{P} _k$, respectively.
\end{coro}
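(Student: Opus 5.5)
The corollary follows by specializing either Proposition~\ref{prop_GSLI} or Proposition~\ref{GSLI_wo_PS} to $\overline{\mathbf{h}}_{m^{\prime}k}=\mathbf{0}$ for all $m^{\prime},k$. I would do this and then check the surviving term directly, to make sure the statistics-based part does not silently depend on the LoS assumption. Setting $\overline{\mathbf{h}}_{m^{\prime}k}=\mathbf{0}$ makes $\overline{\mathbf{H}}_{m^{\prime}lk}=\mathbf{0}$, so the trace term $\mathrm{tr}(\mathbf{W}_{m^{\prime}}^{-1}\overline{\mathbf{H}}_{m^{\prime}lk})$ in \eqref{term_wo_ps} vanishes. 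What remains is $0$ for $l\notin\mathcal{P}_k$ and $\sqrt{p_kp_l}\tau_p\mathrm{tr}(\mathbf{R}_{m^{\prime}l}\mathbf{\Psi}_{m^{\prime}k}^{-1}\mathbf{R}_{m^{\prime}k}\mathbf{W}_{m^{\prime}}^{-1})$ for $l\in\mathcal{P}_k$. The case $l=k$ is consistent with this formula, since $\sqrt{p_kp_k}=p_k$ reproduces the diagonal entry of \eqref{approximate_Rician_ps} with its LoS term removed. The quantities $\mathbf{\Psi}_{m^{\prime}k}$ and $\mathbf{W}_{m^{\prime}}$ do not involve the LoS components, so they are unchanged by the specialization.

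For the direct verification, I would write $[\mathbf{\Xi}_{m^{\prime}}]_{kl}=\widehat{\mathbf{h}}_{m^{\prime}k}^{H}\mathbf{W}_{m^{\prime}}^{-1}\widehat{\mathbf{h}}_{m^{\prime}l}$ and use \eqref{CE} with $\overline{\mathbf{y}}^{p}_{m^{\prime}k}=\mathbf{0}$. This gives $\widehat{\mathbf{h}}_{m^{\prime}k}=\sqrt{p_k}\mathbf{R}_{m^{\prime}k}\mathbf{\Psi}_{m^{\prime}k}^{-1}\mathbf{y}^{p}_{m^{\prime}k}$. The key structural fact is that $\mathbf{y}^{p}_{m^{\prime}k}$ depends only on the pilot index $t_k$. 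Hence for $l\in\mathcal{P}_k$ we have $\mathbf{y}^{p}_{m^{\prime}l}=\mathbf{y}^{p}_{m^{\prime}k}$ and $\mathbf{\Psi}_{m^{\prime}l}=\mathbf{\Psi}_{m^{\prime}k}$, while for $l\notin\mathcal{P}_k$ the two pilot observations are independent zero-mean Gaussian vectors. Independence follows from orthogonality of the pilots, disjointness of the two UE sets, and independence of the noise projections.

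Using $\mathbb{E}\{\mathbf{a}^{H}\mathbf{B}\mathbf{c}\}=\mathrm{tr}(\mathbf{B}\,\mathbb{E}\{\mathbf{c}\mathbf{a}^{H}\})$, the expectation is zero for $l\notin\mathcal{P}_k$. For $l\in\mathcal{P}_k$, I would use $\mathbb{E}\{\mathbf{y}^{p}_{m^{\prime}k}\mathbf{y}^{p,H}_{m^{\prime}k}\}=\tau_p\mathbf{\Psi}_{m^{\prime}k}$ together with the Hermitian symmetry of $\mathbf{R}$ and $\mathbf{\Psi}$. This gives
\begin{equation*}
\mathbb{E}\{\widehat{\mathbf{h}}_{m^{\prime}l}\widehat{\mathbf{h}}_{m^{\prime}k}^{H}\}=\sqrt{p_kp_l}\,\tau_p\,\mathbf{R}_{m^{\prime}l}\mathbf{\Psi}_{m^{\prime}k}^{-1}\mathbf{R}_{m^{\prime}k},
\end{equation*}
and hence $[\overline{\mathbf{\Xi}}_{m^{\prime},(2)}]_{kl}=\sqrt{p_kp_l}\tau_p\mathrm{tr}(\mathbf{R}_{m^{\prime}l}\mathbf{\Psi}_{m^{\prime}k}^{-1}\mathbf{R}_{m^{\prime}k}\mathbf{W}_{m^{\prime}}^{-1})$.

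The only point needing care is the covariance normalization. The estimator carries $\sqrt{p_k}$ without an explicit $\tau_p$, while $\mathbf{\Psi}_{m^{\prime}k}$ is defined as the covariance divided by $\tau_p$. Combining $p_k\tau_p$ from the estimator with $\tau_p\mathbf{\Psi}$ from the covariance must leave a single factor $\tau_p$. I would check this against the stated $\widehat{\mathbf{R}}_{m^{\prime}k}=p_k\tau_p\mathbf{R}_{m^{\prime}k}\mathbf{\Psi}_{m^{\prime}k}^{-1}\mathbf{R}_{m^{\prime}k}$, which is exactly the case $l=k$, and the cross-term then follows by the same bookkeeping.
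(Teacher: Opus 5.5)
Your proposal is correct and follows the paper's route: the corollary comes from setting $\overline{\mathbf{h}}_{m^{\prime}k}=\mathbf{0}$ in Proposition~\ref{GSLI_wo_PS} (equivalently Proposition~\ref{prop_GSLI}). Your direct check via $\mathbb{E}\{\widehat{\mathbf{h}}_{m^{\prime}l}\widehat{\mathbf{h}}_{m^{\prime}k}^{H}\}=\sqrt{p_kp_l}\,\tau_p\mathbf{R}_{m^{\prime}l}\mathbf{\Psi}_{m^{\prime}k}^{-1}\mathbf{R}_{m^{\prime}k}$ is the same NLoS computation as in the proof of Proposition~\ref{prop_GSLI}, stated correctly as a matrix identity, and the trace step is valid because $\mathbf{W}_{m^{\prime}}$ is deterministic (it depends only on channel statistics), which you use implicitly.
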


The main difference between these results is how the average of $\mathbf{\Xi }_{m^{\prime}}$ depends on the LoS-related components. 

Based on the uplink GSLI-MMSE combining schemes, we can derive the corresponding downlink GSLI-MMSE precoding schemes by plugging the GSLI-MMSE combining vector in \eqref{vmk_derivation2} into $\overline{\mathbf{g}}_{mk}=\mathbf{v}_{mk}$, and computing the average depending on the channel model.

\section{Numerical Results}
In the numerical results, we investigate a CF mMIMO network with $1\times 1 \, \mathrm{km}^2$ coverage area. All APs and UEs are randomly located in the coverage area. The large-scale fading coefficients are modeled following the same method as in \cite{OBETrans}. For the Rician channel model, we assume that there exist LoS paths between all AP-UE pairs. The spatial channel correlation matrix $\mathbf{R}_{mk} $ is constructed based on the Gaussian local scattering model defined in \cite[Sec. 2.6]{8187178}. Furthermore, we have $\tau _c=200$, $\tau _p=1$, $p_k=200 \, \mathrm{mW}$, $\sigma ^2=-94 \, \mathrm{dBm}$, and the bandwidth is $20 \, \mathrm{MHz}$. Each coherence block is either used for only uplink or downlink data transmission, that is $\tau _u=\tau _d=\tau _c-\tau _p$. For the downlink, we have $p_m=K\times200\, \mathrm{mW}$. For other parameters, we refer to \cite{OBETrans} due to space constraints. Note that the parameter settings used in this section are illustrative. The proposed formulations are general and are applicable to arbitrary numbers of APs, antennas per AP, and UEs, as well as to arbitrary AP/UE placements and coverage regions, etc.

\begin{figure}[t]
\centering
\includegraphics[width=0.9\columnwidth]{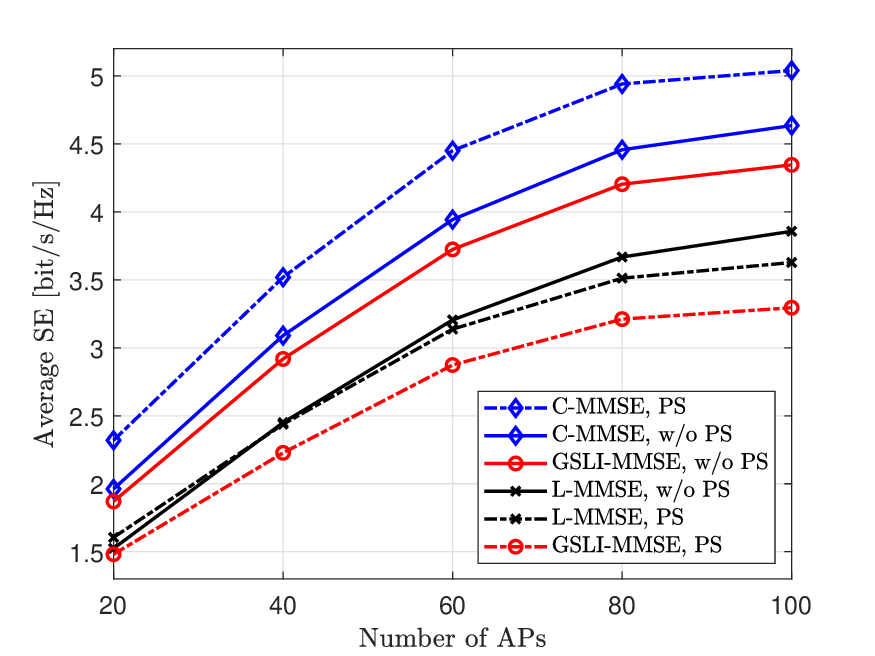}
\vspace{-0.2cm}
\caption{Average uplink SE against $M$ over Rician fading channel models with or without phase-shifts with $K=40$ and $N=4$. ``PS" and ``w/o PS" denote the Rician channel models with phase-shifts and without phase-shifts, respectively. ``C-MMSE" denotes the centralized processing scheme with the C-MMSE combining. ``GSLI-MMSE" and ``L-MMSE" denote the distributed processing schemes with optimal LSFD strategies over the GSLI-MMSE and L-MMSE combining schemes, respectively. \label{1}}
\vspace{-0.5cm}
\end{figure}

\begin{figure}[t]
\centering
\includegraphics[width=0.9\columnwidth]{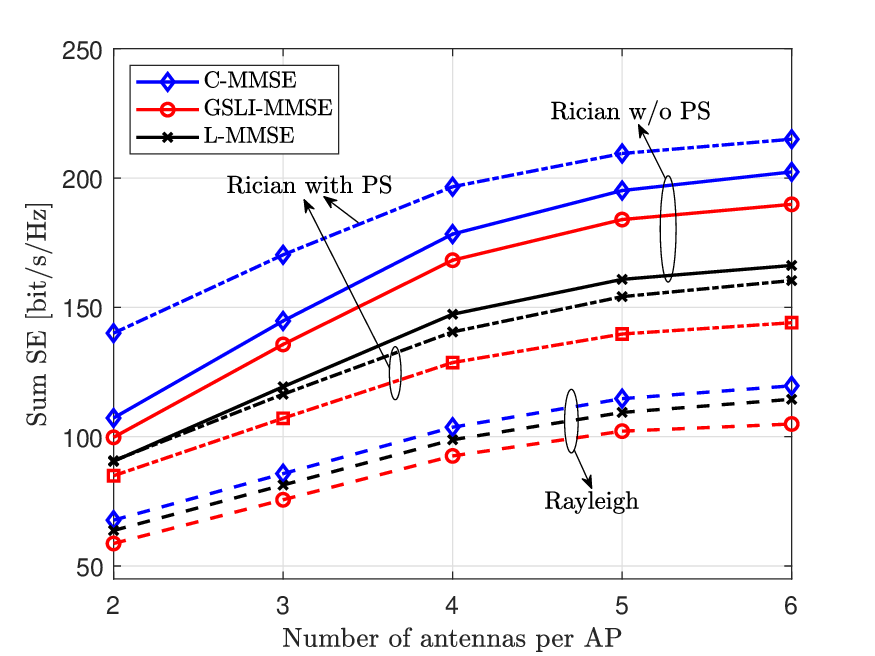}
\vspace{-0.2cm}
\caption{Sum uplink SE against $N$ for different combining schemes over different channel models with $M=80$ and $K=40$. \label{2}}
\vspace{-0.3cm}
\end{figure}

\begin{figure}[t]
\centering
\includegraphics[width=0.9\columnwidth]{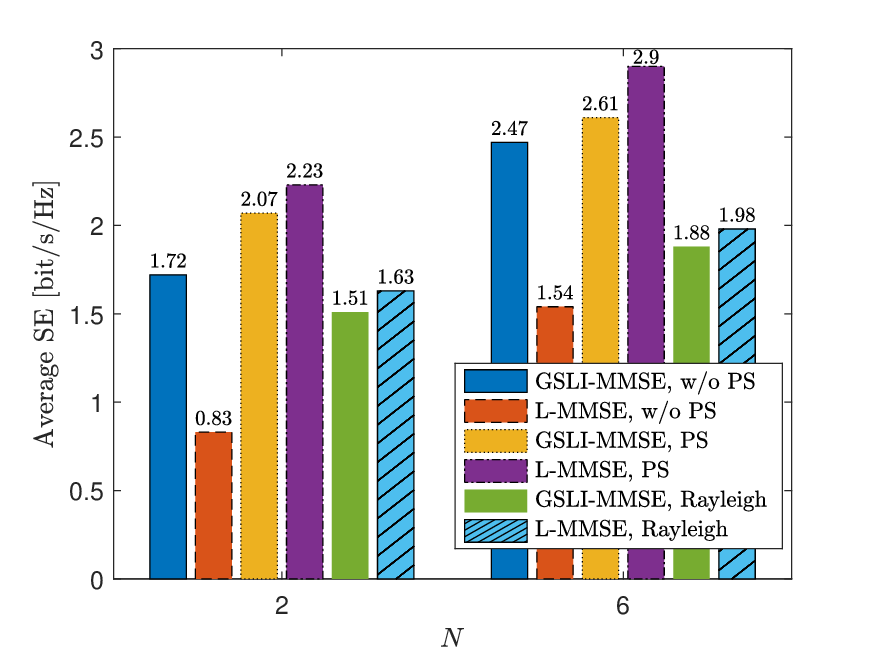}
\vspace{-0.2cm}
\caption{Average downlink SE for different precoding schemes over different channel models with $M=80$ and $K=40$. \label{3}}
\vspace{-0.3cm}
\end{figure}

% \begin{figure*}[htbp]
% 	\centering
% 	\begin{minipage}{0.4\linewidth}
% 		\centering
%   \includegraphics[scale=0.5]{Fig_UL_Rician_N_1013.eps}\vspace{-0.3cm}
% \caption{Sum uplink SE against $N$ for different combining schemes and channel models with $M=80$ and $K=40$. \label{2}}
% 	\end{minipage}
%     	\begin{minipage}{0.4\linewidth}
% 		\centering
%   \includegraphics[scale=0.5]{Fig_DL_Pillar_1007.eps}\vspace{-0.4cm}
% \caption{Average downlink SE for different precoding schemes over different channel models with $M=80$ and $K=40$. \label{3}}
% 	\end{minipage}
%   \vspace{-0.7cm}
% \end{figure*}

Fig.~\ref{1} showcases the average uplink SE against the number of APs, $M$, for different processing schemes over Rician fading channel models with/without phase-shifts. For both Rician channel models, the C-MMSE combining achieves the best SE performance due to its global processing capability. Over the Rician fading channel model with phase-shifts, the L-MMSE combining outperforms the GSLI-MMSE combining. However, for the Rician channel model without phase-shifts, the GSLI-MMSE combining can outperform the L-MMSE combining. These observations are because the GSLI-MMSE combining over the Rician channel model without phase-shifts can efficiently exploit many channel statistics-based cross terms among different APs or UEs, as formalized in Proposition~\ref{GSLI_wo_PS}. However, over the Rician channel model with phase-shifts, due to the existence of random phase-shifts, many expectations of LoS-related cross terms become $0$, so only very limited statistical information can be utilized by the GSLI-MMSE combining, as shown in Proposition~\ref{prop_GSLI}. These arguments suggest that the channel statistics approximation-driven combining schemes, such as the GSLI-MMSE combining, are most effective over the transmitting scenario with stable LoS components, such as the Rician model without phase-shifts.

To further demonstrate the argument above, we study the sum uplink SE versus the number of antennas per AP $N$ over different channel models in Fig.~\ref{2}. As observed, the GSLI-MMSE combining scheme outperforms the L-MMSE combining scheme only over the Rician channel model without phase shifts, e.g., achieving about a $14\%$ SE improvement by the GSLI-MMSE combining compared to the L-MMSE combining for $N=4$. However, over the Rician channel model with phase-shifts and the Rayleigh channel model without LoS components, L-MMSE combining outperforms the proposed GSLI-MMSE combining. The performance gaps between the GSLIM-MMSE and L-MMSE combining schemes over the Rician channel model with phase-shifts and the Rayleigh channel model are about $8\%$. These observations further demonstrate the excellent effectiveness of the channel statistics approximation-driven combining schemes under the scenario with stable LoS components.

Fig.~\ref{3} studies the average downlink SE performance for the GSLI-MMSE and L-MMSE precoding schemes over different channel models. We can observe that a similar argument as above can be derived that the GSLI-MMSE precoding can only outperform the L-MMSE precoding over the Rician channel model without phase-shifts. For instance, the GSLI-MMSE precoding can achieve $60\%$ SE improvement compared to the L-MMSE precoding for $N=6$. Meanwhile, the performance gaps between the GSLI-MMSE and L-MMSE precoding schemes over the Rician channel model with phase-shifts and the Rayleigh channel model are $10\%$ and $5\%$, respectively, which also aligns with the argument above.

Building on the above observations and our prior findings in \cite{OBETrans}, we highlight several guidelines for beamforming in CF mMIMO networks. Under transmission scenarios with stable LoS components, e.g., static UE scenario modeled by the Rician channel without random phase-shifts, utilizing channel statistics-driven beamforming at each AP can enhance the SE performance over other distributed beamforming schemes by substituting certain instantaneous CSI-based terms in MMSE-type schemes with channel statistics-based terms. To implement this, only some global channel statistics components, which remain constant over a long period of time, are required to be transmitted to each AP. Schemes of this kind, including the proposed GSLI-MMSE and the optimal bilinear equalizer (OBE) in \cite{OBETrans}, are effective when LoS conditions are stable, where interference directions are reliably identifiable. By contrast, when LoS conditions are unstable (Rician model with random phase-shifts) or absent (Rayleigh model), one needs to utilize instantaneous CSI related to the most strongly interfering devices to achieve good interference suppression.

\section{Conclusion}\label{sec:conclusion}
We proposed the distributed GSLI-MMSE beamforming scheme for CF mMIMO networks with multi-antenna APs and analyzed it for three different Rician or Rayleigh channel models. For the uplink, for each AP, we replaced the global instantaneous terms from the other APs in the classical C-MMSE combining formulations with their respective statistical averages and derived the GSLI-MMSE combining schemes in closed form. Then, based on uplink–downlink duality, we extended the GSLI-MMSE combining schemes to respective downlink precoding schemes. Numerical results show that in the scenarios with stable LoS components (Rician fading without random phase-shifts), the proposed GSLI-MMSE beamforming notably and consistently outperforms L-MMSE beamforming in both uplink SE and downlink SE across a wide range of scenarios. In such scenarios, the proposed GSLI-MMSE scheme demonstrates performance comparable to the optimal C-MMSE solution, while featuring distributed implementation at each AP.

%\begin{appendices}

%\section{Proof of Proposition \ref{coro:random SINR}}\label{appe:1}

%\newcounter{mytempeqncnt}
%\begin{figure*}[b]
%\normalsize
%\setcounter{mytempeqncnt}{\value{equation}}
%\hrulefill
%\vspace*{4pt}
%\setcounter{equation}{15}
%\begin{align}\label{eq:er_inid_x}\notag
%$aaa\\
%$bbb
%\end{align}
%\setcounter{equation}{\value{mytempeqncnt}}
%\end{figure*}
%\end{appendices}

%\pagebreak[4]

\bibliographystyle{IEEEtran}
\bibliography{IEEEabrv,Ref}

\end{document}